\newtheorem{theorem}{Theorem}[section]
\newtheorem{lemma}[theorem]{Lemma}
\theoremstyle{definition}
\newtheorem{definition}[theorem]{Definition}
\theoremstyle{remark}
\numberwithin{equation}{section}
\begin{document}

\title{The Linear Correlation of $P$ and $NP$}


\author{Bojin Zheng}
\address{South-Central Minzu University, Wuhan, Hubei,430074, China}
\curraddr{182 Minzu Road, Hongshan District,Wuhan, Hubei,430074, China}
\email{zhengbj@mail.scuec.edu.cn}

\author{Weiwu Wang}
\address{China Development Bank, Beijing 100080, China}
\curraddr{NO.16 Taipingqiao Street, Xicheng District, Beijing, 100080, China}
\email{wangweiwu@cdb.cn}

\subjclass[2010]{Primary }

\date{2023-2-22}

\dedicatory{}

\begin{abstract}
$P \overset{\text{?}}{=} NP$ or $P\ vs\ NP$ is the core problem in computational complexity theory. In this paper, we proposed a definition of linear correlation of derived matrix and system, and discussed the linear correlation of $P$ and $NP$. We draw a conclusion that $P$ is linearly dependent and  there exists $NP$ which is is linearly independent and take a 3SAT instance which belongs to $NP$ as the example , that is, $P \neq NP$.  
\end{abstract}

\maketitle


\section{Introduction the $P$ and $NP$}

Now the $P\ vs\ NP$ problem or the $P \overset{\text{? }}{=} NP$ is the most important problem in computer science. It was proposed by Stephen Cook in 1971 when he published his paper \textit{The complexity of theorem-proving procedures} \cite{Cook71thecomplexity}. In 2000, the question was selected as one of the seven Millennium Prize Questions. For decades, a large number of researchers have developed various technologies to try to solve this problem\cite{luo2015en}, but have not yet solved it completely.

Colloquially, $P$ refers to the set of problems solvable by a Deterministic Turing machine (DTM or TM) in polynomial time, and $NP$ refers to the set of problems verifiable by a Deterministic Turing machine in polynomial time. $P\ vs\ NP$ problem or $P \overset{\text{? }}{=} NP$ problem means: whether any problem that can be efficiently verified  can be efficiently solved or not. Cook proposed $NP-$completeness \cite{Cook71thecomplexity} and described the most difficult $NP-$complete problems in $NP$. $NP-$complete problems can be reduced to $3SAT$ problem, that is, if one $NP-$complete problem can be solved efficiently, then all $NP-$complete problems can be solved efficiently. At present, thousands of  $NP-$complete problems have been found in practice, covering dozens of disciplines including biology, chemistry, economy etc. To solve $P \overset{\text{? }}{=} NP$ problem has not only important practical significance, but also important scientific and philosophical significance.

\section{Basic Strategy to Solve $P$ vs $NP$}

The definition of $NP$ bases upon the Non-Deterministic Turing machine (NDTM), which can be regarded as a foreseen, polynomial-time algorithm. Because $P \subseteq NP$, $P \neq NP$ actually means $P \subset NP$, or say, there is at least one element in $NP$ which does not belong to $P$, i.e., which can not be resolved by a general, unforeseen, polynomial-time algorithm.

Some researchers insist on that any statement on proving $P \neq NP$ should prove that at least an $NP$ can not be resolved by any unforeseen, polynomial-time algorithm, or say, a general, unforeseen, polynomial-time algorithm in advance. Actually, this belief is not only dubious, but also unnecessary.

When we talk about ``a general, unforeseen, polynomial-time algorithm'', we denote it ``the myth algorithm'' here, we have actually assumed its existence. At the least, we have endowed it one name or phrase, that is, it exists in the meaning of grammar. Of course, it exists formally. From the viewpoint of formal logic, we can not prove its non-existence. However, our task has been aimed to prove the non-existence of myth algorithm. There comes a logic contradiction.

Fundamentally,  we can not prove the non-existence of the myth algorithm before proving $P \neq NP$. The argument here is not our invention. Historically, Aristotle discussed the non-existence of vacuum with the similar argument. Galois faced the same dilemma and found a way to prove the non-existence of a general algebraic methods for quintic equations. According to the case of Galois, the myth algorithm is unnecessary to discuss. By the way, the existence of the phrase ``the myth algorithm'' has assumed that $P = NP$, and can not be used to prove $P \neq NP$.

A common method to prove the non-existence is based on the feature, or property of existence. For example, if we have $\forall x \in X, f(x) =1$ and $ \exists y , f(y) = 0$, obviously,  we can have $y \not \in X$. That is, if we have $\forall p \in P, LinearDep(p) = true$, and $ \exists np \in NP, LinearDep(np) = false$, we will have $np \not \in P$, i.e., $P \neq NP$.  

In fact, the predict $LinearDep(x)$ means the linear dependence of $x$. That is, if we can prove that for any instance of $P$, it is linearly dependent,  and for an instance of $NP$, it is linearly independent, then we have proved $P \neq NP$. Noticing that $3SAT \in NP$, our proof will reduce to two propositions:
\begin{enumerate}
	\item  $\forall p \in P, LinearDep(p) = true$
	\item $\exists 3sat, LinearDep(3sat) = false$ 
\end{enumerate}

Of course, such two propositions are also feasible:
\begin{enumerate}
	\item  $\forall p \in P, LinearDep(p) = true$
    \item $\exists np \in NP, LinearDep(np) = false$
\end{enumerate}

\section{Definitions and Basic Concepts}

\subsection{The Discrete representation of computable problems}

\begin{definition}{\bf The Discrete representation of computable problems. }
	For any given computable problem, it can be represented by a set of the pair of input and output, i.e. $<X,Y>$.	
\end{definition}

\begin{theorem}\label{eqiv.pa}
	There is a one-to-one mapping relationship between any computable problem and its time-optimal  algorithm.
\end{theorem}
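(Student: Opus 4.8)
The plan is to read Definition~\ref{eqiv.pa}'s hypothesis through Definition~3.1: a computable problem is its graph $\langle X,Y\rangle$, i.e.\ a computable partial function $f\colon X\to Y$. I would then build the correspondence $f\mapsto A_f$ in three moves: (i) show the collection of algorithms computing $f$ is nonempty; (ii) single out a canonical \emph{time-optimal} member of that collection, so that $A_f$ is well defined; and (iii) check that the resulting map is injective and surjects onto the class of time-optimal algorithms, which is exactly what a ``one-to-one mapping relationship'' should mean here.

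For (i), computability of the problem means (Church--Turing) that at least one Turing machine $M$ halts on every $x\in X$ with output $f(x)$; fix its running-time function $t_M$. For (ii) I would consider the family $\mathcal{A}_f=\{\,M : M \text{ computes } f\,\}$ together with the associated running-time functions, read ``time-optimal'' as optimal in the relevant asymptotic sense (e.g.\ optimal up to constant factors, or optimal within the pertinent complexity class), and among all machines attaining that optimum pick the one of least G\"odel number; this selector makes $A_f$ unambiguous. For (iii), injectivity is essentially immediate: two distinct problems have distinct input--output sets, so no single machine computes both, whence $A_{f_1}=A_{f_2}$ forces $f_1=f_2$. Conversely, any time-optimal algorithm $A$ has a well-defined input--output behaviour, hence computes exactly one function $f$, and by construction $A$ is (conjugate to) the time-optimal algorithm attached to that $f$; so the map is onto the class of time-optimal algorithms. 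Combining these yields the claimed bijection.

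The main obstacle is step (ii): the existence and uniqueness of a genuinely time-optimal algorithm is not automatic. The Blum speed-up phenomenon exhibits computable functions for which every algorithm admits a strictly (even super-polynomially) faster one, so no single machine is optimal in the naive sense. To push the statement through one must commit to one reading: (a) restrict attention to problems whose optimal complexity is actually attained; (b) weaken ``time-optimal'' to ``optimal up to constant factors (or within the relevant class)'' and then quotient the machine side by the very same equivalence; or (c) fix once and for all a canonical selector — the least index among machines meeting the infimal complexity, when such a minimum exists. I expect the entire argument to hinge on which of these is adopted; once ``the time-optimal algorithm'' is pinned down, the injectivity and surjectivity parts are routine.
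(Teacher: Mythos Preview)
Your three-step skeleton — (i) algorithms exist, (ii) pick out an optimal one, (iii) read off the inverse from input/output behaviour — is exactly the shape of the paper's argument, so at the level of strategy you are aligned with it. The paper's proof, however, is far more cursory than yours: for step~(ii) it simply asserts that because running-time values stand in a ``full-order relationship'' there \emph{must} be a time-optimal algorithm, with no selector, no quotient, and no mention of uniqueness; for step~(iii) it just observes that the input/output pairs of the optimal algorithm recover the original problem. There is no G\"odel-number tie-breaker and no discussion of asymptotic equivalence classes.

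In particular, the obstacle you flag — Blum speed-up, and more generally whether the infimum of running times is actually attained and by a unique machine — is not addressed in the paper at all. Your instinct that the whole statement hinges on how ``time-optimal'' is made precise is correct as a mathematical matter; the paper simply takes the naive reading and does not engage with the difficulty. So your proposal is, if anything, a strictly more careful version of the paper's own argument rather than a different route to the same end.
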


\begin{proof}
	First to prove that any computable problem must correspond to an time-optimal algorithm. For any computable problem, there are a series of algorithms to solve the problem. For any algorithm, there is a measure of solving time. Since the measure value is full-order relationship, there must be an time-optimal algorithm.
	Then to prove that any time-optimal  algorithm must correspond to a computable problem. For the time-optimal  algorithm, the input and output of the algorithm are written as a tuple $<X,Y>$, which must correspond to the original problem.
	Therefore, computable problem and time-optimal algorithm are one-to-one mapping relationship, that is, equivalence relationship.
\end{proof}

As to the theorem\ref{eqiv.pa}, Oded Goldreich talked about the same content\cite{Gold2015}: ``... in the theory of computation, where the representation of objects plays a central role. In a sense, a computation merely transforms one representation of an object to another representation of the same object. ... Indeed, the answer to any fully specified question is implicit in the question itself, and computation is employed to make this answer explicit.''

According to theorem \ref{eqiv.pa}, in this paper, the concept of the $P$ problems is used the same as the polynomial deterministic algorithm, that is, the $P$ algorithm (denoted as $P_f$), and the concept of the $NP$ problems is used the same as the deterministic algorithm for the $NP$ problems.

\subsection{The Philosophy of Computation and Representation}

According to theorem\ref{eqiv.pa} and opinion of Oded, we can treat the problem and the algorithm as two sides of one coin. As to the coin, we name it ``system''. That is, the problem is a representation of the system, and the algorithm is another. We denote the system as a tuple $<X, f(x), Y>$. Obviously, this tuple can be represented as a matrix, whose rows indicate the records or instances of the tuple, we call this matrix as derived matrix. 

The (time-optimal) algorithm, the problem and the system can be regarded as the same thing, and they share the same derived matrix. Here, we call the derived matrix the discrete representation of the system, which can be rewritten as a tuple $<X, f(x), Y>$, i.e., the systemic representation.

As to the algorithm, it can be regarded as the algorithmic representation because it is equivalent to the system and the problem.

Considering to any algorithm can be understood as the Turing machine, the corresponding Turing machine will be called as Turing machine representation. If there exists equivalent abstract machine, the corresponding machine can be called as machine representation. For example, the Random Access Machine.  

Because the universal Turing machine exists, a Turing machine can be coded as an integer (program code), i.e., number, we can use an integer to represent the problem. That is, the problem has a number field representation.

Because the process of execution of a Turing machine can be modeled as a graph, whose nodes indicates the states, all the executions can be organized a tree, the root is the HALT status and the leaves are the initial statuses. This kind of method call as topological representation. 

Of course, the SAT problem can be regarded as a kind of representation based on symbolic logic, we can call it the SAT representation or the symbolic logic representation. 

When we think about the nature of $P$ and $NP$ from the view of representation, some interesting questions will arise. For examples, what will happen when the number field representation is extended to complex style? How to extend the Random Access Machine to deal with $NP$ in polynomial time?  What is the difference between the topological graph of $P$ and of $NP$? How to extended the systemic representation to represent $NP$? and so on. This paper will not answer all these questions, leaving some of them to the future work.

\subsection{The Linear Correlation of Tuple}

Assuming that the natural number $S$ can be expressed as a string of symbols, $S=S_n\cdots S_2S_1$, here $S_i \in \{0,1\}$, then $S$ can be regarded as a bit vector with dimension $n$. In this paper, vertical vectors are used, since that vertical vectors are easy to correspond to representations of natural numbers.

Considering the tuple $\Phi = < X, Y > $, let $X = S = S_n \cdots S_2S_1 $, $Y \in \{0, 1 \} $, $< X, Y > $ will form a new $(n + 1)$-dimension vector $\Phi_A = (S_n, \cdots, S_2, S_1, Y) $.

$\Phi \subseteq X \times Y$. Here, $\times$ means Descartes product.

When $\Phi$ is linearly dependent, and $X$ is linearly independent, $\Phi$'s dimension number will be equivalent to $n$.

Because  we shall discuss the $P$ and $NP$, whose input are natural number, the dimension number of $X$, i.e., the binary representation of natural number, is variant relative to $n$, we just treat $X$ a whole, and let its dimension number to be 1. The same to $Y$, its dimension number is 1. We only focus on the linear correlation between $X$ and $Y$.

\begin{definition}\label{linearRela}{\bf The linearly dependent of a tuple.}
	If a vector set $\Phi_A$ is linearly dependent, then the tuple $\Phi$ is linearly dependent.
	If a vector set $\Phi_A$ is linearly independent, then the tuple $\Phi$ is linearly independent.
\end{definition}	

According to the definition \ref {linearRela}, it is clear that, if the vector set of $< X, Y > $ is linearly dependent, then the dimension number of $< X, Y > $ is 1, noted as $|| < X, Y > || = 1 $. If  linearly independent, the dimension number of the tuple $< X, Y > $  is 2, noted as $|| < X, Y > || = 2 $.

\begin{theorem}\label{theorem.cup.linerRela}
	Assume that there are two sets $A$ and $B$, $X$ and $Y$ are linearly independent separately, $A, B \subseteq  X \times Y $,and $A \cap B = \emptyset$. Let $||A||=1$,$||B||=2$, then $||A \cup B||=2$. Let $||A||=1$, $||B||=1$, then $||A \cup B||$ is uncertain, maybe 1, maybe 2. Let $||A||=2$, $||B||=2$, then $||A \cup B||=2$.
\end{theorem}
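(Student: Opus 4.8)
The plan is to read $\|\cdot\|$ as the rank of a derived matrix. For a set $C\subseteq X\times Y$, after collapsing $X$ and $Y$ each to a single coordinate as prescribed in the paragraph before Definition \ref{linearRela}, every record becomes a point $(x,y)$ of the plane spanned by the $X$-axis and the $Y$-axis, and $\|C\|$ is the dimension of the linear span of these points. By Definition \ref{linearRela} and the discussion following it, $\|C\|=1$ precisely when this span is a line and $\|C\|=2$ precisely when it is the whole plane; the hypothesis that $X$ and $Y$ are ``linearly independent separately'' is what excludes the degenerate value $0$, so throughout we may assume $\|C\|\in\{1,2\}$.

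The engine of the argument is a single monotonicity remark, which I would establish first: if $C\subseteq D$ then $\operatorname{span}(C)\subseteq\operatorname{span}(D)$, hence $\|C\|\le\|D\|$, while always $\|D\|\le 2$ because there are only two coordinates. Given this, the first assertion is immediate: from $B\subseteq A\cup B$ we get $2=\|B\|\le\|A\cup B\|\le 2$, so $\|A\cup B\|=2$. The third assertion is the same argument with $A$ (or $B$) in the role of $B$: $2=\|A\|\le\|A\cup B\|\le2$, so $\|A\cup B\|=2$.

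The only clause with real content is the middle one, where $\|A\|=\|B\|=1$ and the value of $\|A\cup B\|$ genuinely depends on the sets; I would settle it by producing a witness for each outcome. For $\|A\cup B\|=1$, take $A=\{(1,0),(2,0)\}$ and $B=\{(3,0),(4,0)\}$: these are disjoint, each spans exactly the $X$-axis, and their union still lies on the $X$-axis, so $\|A\cup B\|=1$. For $\|A\cup B\|=2$, take $A=\{(1,0),(2,0)\}$ and $B=\{(0,0),(3,1)\}$: again $A\cap B=\emptyset$, $\operatorname{span}(A)$ is the $X$-axis and $\operatorname{span}(B)$ is the line through $(3,1)$ — both of dimension $1$, and $B$ is even linearly dependent in the strict sense since it contains the zero vector — whereas $A\cup B$ contains the independent pair $(1,0),(3,1)$ and hence $\|A\cup B\|=2$.

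The step I expect to be the main obstacle is exactly this last construction: one must honor all the side conditions at once, namely $A\cap B=\emptyset$ together with each of $A$ and $B$ having dimension number \emph{exactly} $1$ under the restriction $Y\subseteq\{0,1\}$. The mild trap is that a two-element subset of $X\times Y$ whose span is a line off the $X$-axis is forced to contain $(0,0)$ — if both $y$-coordinates equal $1$ and the two rows are proportional, the rows coincide, contradicting distinctness — which dictates the shape of the second witness; a one-element $B=\{(3,1)\}$ would serve equally well for the span count. Once both witnesses are in hand, the ``uncertain'' clause is proved and the theorem follows.
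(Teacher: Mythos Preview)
Your proposal is correct; the monotonicity observation $\|C\|\le\|D\|$ for $C\subseteq D$, together with the ambient bound $\|D\|\le 2$, disposes of the first and third clauses cleanly, and your two explicit witnesses settle the middle clause.

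As for comparison: the paper does not actually prove Theorem~\ref{theorem.cup.linerRela}. Immediately after the statement it only paraphrases the three clauses in prose and records the contrapositive (``a linearly independent proper subset cannot be found in a linearly dependent tuple''), with no argument given. The subsequent subsection of examples does exhibit concrete record sets---e.g.\ $\{\langle 0,0\rangle,\langle 1,1\rangle\}$ with dimension $1$ and $\{\langle 1,0\rangle,\langle 0,1\rangle\}$ with dimension $2$---that could be repurposed as the witnesses you need for the middle clause, but the paper never assembles them into a proof of the theorem. So your write-up is not merely a different route; it supplies an argument where the paper offers none. If you want your witnesses to sit strictly inside the paper's running examples (where both coordinates are $0/1$), you could swap your pairs for singletons such as $A=\{(0,0)\}$, $B=\{(1,1)\}$ for the $\|A\cup B\|=1$ case and $A=\{(1,0)\}$, $B=\{(0,1)\}$ for the $\|A\cup B\|=2$ case, but this is cosmetic---your choices are already valid under the paper's convention that $X$ is a natural number collapsed to a single coordinate.
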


Theorem \ref{theorem.cup.linerRela} shows that the union of two linearly dependent tuple may be linearly dependent or independent; The union of two linearly independent tuple must be linearly independent; If one tuple is linearly dependent and the other tuple is linearly independent, then the union of these two tuples must be linearly independent.

Conversely, a linearly dependent proper subset may be found in a linearly independent tuple; A linearly independent proper subset cannot be found in a linearly dependent tuple, but only a linearly dependent proper subset can be found.

\subsection{Examples of Linear Correlation of Tuple}

Similar to the terms in the computer database, we call the instances of tuple as records. 

The record set $\{<x_0,y_0>\}$ is always linearly dependent when there is only one element. Here, $x_0 \in X$ and $y_0 \in Y$.
 
When $< x_0 ,y_0 > = < 1, 1 > $, $k_1 = 1 $, $k_2 = -1 $, have $k_1 * 1 + k_2 *(-1) = 0 $, therefore, it is linearly dependent.

The record set $\{<0,0>,<1,1> \}$ is linearly dependent.

For the record set $\{<1,0>,<0,1> \}$, it can be calculated that the set is linearly independent. Therefore, $\{< 1, 0 >, < 0, 1 >, < 0, 0 > \} $ and $\{< 1, 0 >, < 0, 1 >, < 0, 0 >, < 1, 1 > \} $ is linearly independent.

For record set $\{< 1, 0 >, < 0, 1 > \} $, if  $y '= \bar {y} $, then we get $\{< 1, 1 >, < 0, 0 > \} $, at this time it is linearly dependent. This fact shows that in some cases, the inverse of a vector element may lead to the change of its linear correlation.

But $\{< 1, 0 >, < 0, 1 >, < 1, 1 > \} $ is linearly independent, the inversely transformed result $\{< 1, 1 >, < 0, 0 >, < 1, 0 > \} $ is also linearly independent.

\subsection{The Linear Correlation of System}

The system can be written as $<X, f, Y> $. Here, $f:X \rightarrow Y, X= \mathbb{N}, Y=\{0,1\}$; because $f$ can be represented as a Turing Machine, its code is a binary constant.  Because $f$ has a result, $f$ often can be rewritten as $<f_{code}, f(x)>$. That is, $<X, f, Y> $ can also be rewritten as $<X, f_{code}, f(x), Y>$. $f_{code}$ can be transferred to a program under a certain universal Turing Machine.

\subsection{The Definition of Derived Matrix of System}

Generally, $P_f$ can be expressed as 
$P_f:X \rightarrow Y, X=x_1 x_2\cdots x_n  \subseteq \mathbb{N}, x_i \in \{0,1\},Y=\{0,1\}, i \in \{1,n\}, Time(P_f) \in Poly(n)$.

The derived matrix of $P_f$, or say, of the system $<X, P_f, P_f(x),Y> $ can be defined as 
 \begin{equation}\label{phi}
 	\Phi(P_f) = (x_1,x_2,\cdots,x_n,P_f, P_f(x), Y)
 \end{equation}

Here, $P_f$ is a program code, i.e., a constant. 

\subsection{The Impact of the Constant in the Input}

In this paper, the linear correlation is the focus. As to the linear correlation of the derived matrix, a trivial problem is: how about when the input has a constant?  

Actually, when we defined the linear correlation of tuple, we focus on the relationship between $Y$ and $P_f(x)$. When $Y= P_f(x)$, then linearly dependent; When $Y \not = P_f(x)$, we should check the linear independence according to the definition of linear space. But when $Y \not = P_f(x)$ and $Y= P_f(x)$  for some $x$s, the system will be linearly independent. 

As to the constant $P_f $ in the derived matrix, we should ignore it and do no calculate the linear correlation with it.    

\subsection{The Theorem of Linear Dependence of System}
Because $f(x)=Y$, $<X, f_{code}, f(x), Y>$ will be linearly dependent. In fact, $Y$ always equals to $f(x)$, the system will always be linearly dependent. 

Obviously, $P_f$, the function of $\forall p \in P$, will have a corresponding system which is linearly dependent, because $P_f$ is a $f$. Therefore, we have $LinearDep(p) =true$;
\begin{theorem}\label{p.lindep}
	$\forall P_f, LinearDep(P_f) =true$
\end{theorem}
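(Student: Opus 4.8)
The plan is to reduce Theorem \ref{p.lindep} directly to the preceding discussion on the linear dependence of systems. The statement $\forall P_f,\ LinearDep(P_f) = true$ is, by Definition \ref{linearRela}, a claim about the vector set $\Phi_A$ obtained from the derived matrix $\Phi(P_f) = (x_1, x_2, \dots, x_n, P_f, P_f(x), Y)$ defined in \eqref{phi}: we must exhibit a nontrivial linear combination of the rows (records) that vanishes, after discarding the constant column $P_f$ as prescribed in the subsection on the impact of the constant in the input, and collapsing the $X$-block to a single coordinate as agreed in the subsection on the linear correlation of tuples.

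First I would fix an arbitrary polynomial-time deterministic algorithm $P_f$ with $P_f : X \to Y$, $Y = \{0,1\}$, and form its system $\langle X, P_f, P_f(x), Y\rangle$. The key observation, already made in the text immediately before the theorem, is that for every input $x$ the record satisfies $Y = P_f(x)$ by the very definition of how the output column is populated: $Y$ is not an independent label but literally the value returned by $P_f$. Hence in each record the $P_f(x)$-coordinate and the $Y$-coordinate are equal. Second, I would invoke the convention that the $X$-block contributes dimension $1$ and the constant column $P_f$ is ignored, so that each record $\Phi_A$ reduces to the effective vector $(\,x\text{-summary},\ P_f(x),\ Y\,)$ with $P_f(x) = Y$. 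Third, I would apply Definition \ref{linearRela} together with the remark following it: since the $P_f(x)$-column and the $Y$-column coincide on every record, the column vectors are identical, so taking coefficients $k_1 = 1$ on the $P_f(x)$-column and $k_2 = -1$ on the $Y$-column (and $0$ elsewhere) gives a nontrivial vanishing combination; equivalently $\|\langle X, Y\rangle\| = 1$. This is exactly the one-element case computation already carried out in the examples subsection ($\langle 1,1\rangle$ with $k_1 = 1$, $k_2 = -1$), now applied columnwise rather than rowwise. Therefore $\Phi(P_f)$ is linearly dependent, i.e. $LinearDep(P_f) = true$, and since $P_f$ was arbitrary the universal statement follows.

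The main obstacle, as I see it, is not any calculation but pinning down precisely which linear-algebraic object the predicate $LinearDep$ is being applied to — rows versus columns of the derived matrix, whether the constant column and the variable-width $X$-block are in or out, and over which field. The earlier subsections make a sequence of conventions (treat $X$ as dimension $1$, ignore the constant $P_f$, focus only on the $Y$ versus $P_f(x)$ relation), and the honest content of this proof is simply the faithful bookkeeping that, under all of those conventions simultaneously, the identity $Y = P_f(x)$ forces dependence. Once that dictionary is made explicit the argument is immediate; I would therefore spend the body of the proof stating the reduction cleanly and then citing the "$Y$ always equals $f(x)$" sentence as the crux, rather than belaboring the trivial two-vector linear algebra.
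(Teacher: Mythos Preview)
Your proposal is correct and follows exactly the paper's own reasoning: the paper's proof is literally ``Omitted,'' with the preceding paragraph stating that because $Y = f(x)$ always holds, the system $\langle X, f_{code}, f(x), Y\rangle$ is linearly dependent, and since any $P_f$ is such an $f$, the claim follows. You have simply made explicit the bookkeeping (ignore the constant column, collapse $X$, take coefficients $1$ and $-1$ on the identical $P_f(x)$ and $Y$ columns) that the paper leaves tacit.
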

\begin{proof}
	Omitted.
\end{proof}

\section{The Details of Proving $P \neq NP$}

According to theorem \ref{p.lindep}, arbitrary $P$ is linearly dependent. If we can take a $NP$ example which is linearly independent, then we can prove $P \neq NP$. We will illustrate the nature of $NP$ and prove that some $NP$s are linearly independent.

\subsection{Extend the System to the Complex Boolean System}

\begin{definition}{\bf The complex boolean number.}
	We call the number with the expression  $z = a + bi, a \in \{0,1\},b \in \{0,1\}$ the complex boolean variable , the real part $\theta(z)=a$.  $0+0*i=0$.	
\end{definition}

\begin{definition}{\bf The value set of Complex Boolean variable.}
	The set of values of complex Boolean numbers is called the complex Boolean set. Noted as $CB=\{0, i, 1, 1+i\}$.	
\end{definition}

\begin{definition}{\bf The modulus of complex Boolean numbers.}
	A Boolean number $z=a + bi$, its modulus $\psi (z)$ of $z$ has,
	\begin{equation}\label{bool.modulus}
		\psi (z) = a \vee b
	\end{equation}
\end{definition}

According to the definition of the system, obviously, the input is based upon the real integer which can be treated as boolean variables. As has been known, every number can be represented with 4 as the carry base. That is, we can combine every 2 bits as a whole, such 2 bits can be treated as complex boolean variables, one bit is real part, and the other bit is imaginary part.
 
As to the value of $f(x)$, we only focus on the real part, therefore, we can use $\theta(f(x)) $ to replace previous $f(x)$. As to $Y$, because $Y \ in CB$, i.e., is a complex boolean variable, we can use $\psi(Y)$ to replace previous $Y$.
Thus, the complex boolean system can be rewritten as $<X, f, \theta(f(x)), \psi(Y)>$.

\subsection{$NP$ is the Complex Boolean System}
Cook proposed $NDTM$ to describe $NP$. Here, we prove that any $NDTM$ recognizable language $L$ can be converted to the $CNF$ formula on complex Boolean variables. For the sake of description, we give the definition of Turing machine and non-deterministic Turing machine.

\begin{definition}[Turing Machine]
	A Turing Machine $M$ is a tuple $\left\langle \Sigma, \Gamma, Q, \delta \right\rangle $. $ \Sigma $ is a finite nonempty set with input symbols; $ \Gamma $ is a finite nonempty set, including a blank symbol $ b $ and $ \Sigma $; $ Q $ is a set of possible states, $ q_{0} $ is the initial state, $ q_{accept} $ is an accept state, $ q_{reject} $ is a reject state; $ \delta $ is a transition function, satisfying
	\[ \delta:(Q - \left\lbrace q_{accept}, q_{reject}\right\rbrace ) \times \Gamma \to Q \times \Gamma \times \left\lbrace -1, 1 \right\rbrace  \]
	if $ \delta(q,s)=(q',s',h) $, the interpretation is that, if $M$ is in state $q$ scanning the symbol $s$, then $q'$ is the new state, $s'$ is the symbol printed, and the tape head moves left or right one square depending on whether $h$ is $-1$ or $1$. $ C $ is a configuration of $ M $, $ C=xqy $, $ x,y \in \Gamma^{*} $,$ y $ is not empty, $ q \in Q $. The computation of $ M $ is the unique sequence $ C_1, C_2, \cdots $. If the computation is finite, then the number of steps is the number of configurations.		
\end{definition}

\begin{definition}[Nondeterministic Turing Machine]
	A Nondeterministic Turing Machine $N$ is a tuple $ \left\langle \Sigma, \Gamma, Q,\delta_1, \delta_2 \right\rangle $. The difference between $ N $ and $ M $ lies in the transition function. There are two functions $ \delta_1 $ and $ \delta_2 $ in $ N $. In any configuration $ C_i $, you can choose between the different functions. The choice is denoted as $ \Delta_{i,j} $, $ j \in \{1,2\} $. The sequence of choices is a   path of computation. If there exits a path $ \Delta_{1,j_1}, \Delta_{2,j_2}, \cdots  $ leading to the accepting state, $ N $ halts  and accepts. The number of computation is the number of steps in the shortest path.
\end{definition}

There is a lemma to compute the size of $ CNF $ \cite{luo2015en}.


\begin{lemma}
	For a given boolean function $ f:\{ 0,1\}^l \to \{0,1\}$, there exists a $ CNF $ formula $ \varphi $, which size is $ l2^l $, satisfying
	$ \varphi(u)=f(u) $ , $ \forall u \in\{0,1\}^l $. The size of a $ CNF $ formula is the number of $ \wedge $ and $ \vee $.
\end{lemma}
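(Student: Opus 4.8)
The plan is to prove the lemma by explicit construction of the CNF formula together with a counting argument for its size. First I would observe that any Boolean function $f : \{0,1\}^l \to \{0,1\}$ is completely determined by the set $Z = \{ u \in \{0,1\}^l : f(u) = 0 \}$ of its zeros. For each such $u = (u_1,\dots,u_l) \in Z$, I would introduce the clause
\begin{equation}\label{clauseu}
  C_u \;=\; \bigvee_{k=1}^{l} \ell_{k}^{(u)}, \qquad
  \ell_{k}^{(u)} \;=\; \begin{cases} \neg x_k & \text{if } u_k = 1,\\ x_k & \text{if } u_k = 0,\end{cases}
\end{equation}
so that $C_u$ evaluates to $0$ on the single assignment $x = u$ and to $1$ on every other assignment in $\{0,1\}^l$. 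Then I would set $\varphi = \bigwedge_{u \in Z} C_u$.

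The correctness claim $\varphi(u) = f(u)$ for all $u \in \{0,1\}^l$ then follows by a short case split: if $f(u) = 1$ then $u \notin Z$, so every clause $C_w$ with $w \in Z$ differs from $u$ in at least one coordinate and hence is satisfied, giving $\varphi(u) = 1$; if $f(u) = 0$ then $u \in Z$, so the clause $C_u$ is falsified and $\varphi(u) = 0$. For the size bound, each clause $C_u$ contains $l$ literals, hence $l - 1$ occurrences of $\vee$; joining $|Z|$ clauses with $\wedge$ contributes a further $|Z| - 1$ occurrences of $\wedge$. Since $|Z| \le 2^l$, the total number of $\wedge$ and $\vee$ symbols is at most $|Z|(l-1) + (|Z|-1) = |Z| \cdot l - 1 \le l 2^l$, which is the claimed bound (and one may simply state the size as $l 2^l$ as an upper estimate, absorbing the degenerate cases $Z = \emptyset$, where $\varphi$ is the empty conjunction / constant $1$, and $Z = \{0,1\}^l$).

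I do not expect a genuine obstacle here; the only points requiring a little care are the bookkeeping conventions for the ``size'' (whether one counts $l 2^l$ exactly or as an upper bound, and how to treat the empty conjunction when $f \equiv 1$), and making sure the literal assignment in \eqref{clauseu} is oriented correctly so that $C_u$ vanishes exactly at $u$. Everything else is the standard CNF normal-form argument, so I would keep the write-up to the construction, the two-line correctness verification, and the one-line count.
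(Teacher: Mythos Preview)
Your construction is the standard canonical-CNF argument and is correct; the clause in \eqref{clauseu} is oriented properly, the two-case verification is fine, and the count $|Z|(l-1)+(|Z|-1)\le l2^l$ gives the stated bound. The paper itself does not prove this lemma at all: it merely states it and attributes it to an outside reference, so there is no in-paper proof to compare against, but your write-up is exactly the argument one would expect behind such a citation.
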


\begin{theorem}
	If a language $ L $ is accepted by a Nondeterministic Turing Machine $ N $, then  $ L $ is reducible to a $ CNF $ formula $ \varphi_z $ of Complex Boolean $ z $.
\end{theorem}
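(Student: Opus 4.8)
The plan is to adapt the tableau construction of the Cook--Levin theorem, replacing ordinary Boolean variables by complex Boolean variables so that the nondeterministic branching of $N$ is recorded in the imaginary part. First I would fix an input $x$ of length $n$ and let $p(n)$ be a polynomial bounding the running time of $N$ on $x$. I would lay out a $p(n)\times p(n)$ tableau whose rows are the successive configurations $C_0, C_1, \dots$ of $N$ along a computation path, and introduce, for each cell $(t,k)$ and each symbol or state $\sigma \in \Gamma \cup Q$, a complex Boolean variable $z_{t,k,\sigma}$: the real part $\theta(z_{t,k,\sigma})$ plays the role of the usual indicator ``cell $(t,k)$ holds $\sigma$'', while the imaginary part carries the bit $\Delta_{t,j}$ selecting between $\delta_1$ and $\delta_2$ at step $t$.

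Next I would write down the usual families of clauses: (i) every cell has exactly one content; (ii) the first row encodes the start configuration; (iii) some row contains $q_{accept}$; and (iv) the crucial local-consistency constraint, that each $2\times 3$ window of the tableau is legal with respect to $\delta_1$ or to $\delta_2$, the choice being dictated by the imaginary bit of that row. For step (iv) I would invoke the preceding lemma: window-legality is a Boolean function on a constant number $l$ of input bits (the six cells of the window, each encoded over $\Gamma \cup Q$, together with the choice bit), hence it is computed by a $CNF$ of size $l2^l = O(1)$; conjoining over all $O(p(n)^2)$ windows yields a $CNF$ formula $\varphi_z$ of polynomial size. Finally I would verify correctness of the reduction: an accepting path of $N$ on $x$ gives an assignment of the $z$'s (real parts from the tableau contents, imaginary parts from the sequence of choices) that satisfies $\varphi_z$ once $\theta$ is applied to the computation slots and $\psi$ to the output slot as in the complex Boolean system; conversely any satisfying assignment of $\varphi_z$ reconstructs, through $\theta$, a legal accepting tableau, so $x \in L$. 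Since $x \mapsto \varphi_z$ is computable in polynomial time, $L$ is reducible to satisfiability of a $CNF$ over complex Boolean variables.

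The main obstacle I anticipate is bookkeeping rather than conceptual: making the interface between the complex Boolean operations $\theta, \psi$ and the purely Boolean $CNF$ supplied by the lemma precise, so that ``legal window'' genuinely becomes a Boolean function of the bits $\theta(z)$ and $b$ (where $z = a+bi$) to which the lemma applies, and checking that encoding the nondeterministic choice in the imaginary part introduces no constraint lying outside the $CNF$ format. A secondary subtlety is that the choice bit must be shared consistently among all windows sitting in the same row (all cells of row $t$ must agree on $\Delta_{t,j}$); this adds one more family of equality clauses, again $CNF$ of constant size per window, so the polynomial bound on the size of $\varphi_z$ is unaffected.
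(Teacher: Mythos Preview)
Your proposal is a correct adaptation of Cook--Levin and would establish the theorem, but the route differs from the paper's in two respects.  First, you use the standard $p(n)\times p(n)$ tableau with $2\times 3$ local windows, whereas the paper encodes each configuration $C_i$ by a fixed-length $c$-bit string and writes a single transition constraint $C_i=F(C_{i-1},x_{\mathrm{inputpos}(i)},C_{\mathrm{prev}(i)})$ per step, invoking the lemma once with $l=3c+1$; this gives a formula of size $O(T(n))$ rather than $O(p(n)^2)$.  Second, and more substantively, you attach the complex structure to the \emph{variables}: each $z_{t,k,\sigma}$ is complex Boolean, with $\theta(z)$ the usual cell-content indicator and the imaginary bit carrying the per-row choice $\Delta_{t,j}$.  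The paper instead attaches the complex structure to the \emph{formula}: it first splits the two transition functions into a common part $\delta_0'=\delta_1\cap\delta_2$ and genuinely branching parts $\delta_1',\delta_2'$, then places every subformula arising from a deterministic step (those governed by $\delta_0'$, as well as the input, initial, and final assertions) in the real component $a$ of $z$, and every subformula arising from a nondeterministic choice in the imaginary component $b$; acceptance is then ``$a=1$ and $b=1$''.  Your version makes the semantics of the imaginary bit (a per-step guess) completely explicit and forces you to add the row-consistency clauses you already identified; the paper's version hides that bookkeeping but in exchange requires the preliminary decomposition $\delta_0',\delta_1',\delta_2'$ so that one can tell, step by step, which half of $z$ a subformula belongs to.
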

\begin{proof}
	Suppose a Nondeterministic Turing Machine $ N $ accepts a language $ L $ within time $ T(n) $. There xists a path $ \Delta_{1,j_1}, \Delta_{2,j_2}, \cdots, \Delta_{T(n),j} $ leading to the accepting state.
	
	Suppose $\delta_0'= \delta_1 \cap \delta_2 $, $ \delta_1' = \delta_1 - \delta_0' $, $  \delta_2' = \delta_2 - \delta_0' $,the path $ \Delta_{1,j_1}, \Delta_{2,j_2}, \cdots $ changes to the path $ \Delta_{1,j_1'}', \Delta_{2,j_2'}', \cdots $, where $ \Delta_{i,0}' $ is deterministic, $ \Delta_{i,1}' $ and $ \Delta_{i,2}' $ are nondeterministic. 
	
	For example, a path $ \Delta_{1,0}', \Delta_{2,1}', \Delta_{3,2}', \Delta_{4,0}', \cdots $, $ \Delta_{1,0}' $ and $ \Delta_{4,0}' $ are deterministic, which are both from $\delta_0' $; $ \Delta_{2,1}' $ and $ \Delta_{3,2}' $ are nondeterministic, which are  from $\delta_1' $ and $\delta_2' $.
	
	Given an input $ x \in \left\lbrace 0,1\right\rbrace^{n}  $, we will construct a proposition formula $ \varphi_z $ such that $\varphi_z $ is satisfiable iff $ N $ accepts $ x $.
	
	Step 1 construct formula $ F_1 $ to assert the input $ x $. Suppose $ y_1, y_2, \cdots, y_i, $ $\cdots, y_n $, $ y_i = x_i $ . The size of $ F_1 $ is $ 4n $. The number of variables is $ n $. $F_1$ exits in $a$ of $z$.
	
	Step 2 construct formula $ F_2 $ to assert the initial configuration $ C_0 $. A configuration can be coded in a length of $ c $ bits string $w_0$. $ c $ is decided by $ \left| Q\right|  $ and $ \left| \Gamma \right| $. The size of $ F_2 $ is $ c2^c $. The number of variables is $ c $. $F_2$ also exits in $a$ of $z$. 
	
	Step 3 construct formula $ F_3 $ to assert $ 1\leq i \leq T(n)$ configurations. $ C_i $ (coded by $w_i$) is decided by the state of $ C_{i-1} $, the input symbol $ x_{inputpos(i)} $, the tape symbol (which is decided by the last operation of the head $ C_{prev(i)} $). There exists a function $ F $ (which is decided by $ \delta_0' $, $ \delta_1' $ and $ \delta_2' $), satisfying 
	\[ C_i = F(C_{i-1}, x_{inputpos(i)}, C_{prev(i)}) \]
	For each configuration $ C_i $, the number of variables is at most
	$ 3c+1 $. The size of formula is $ (3c+1)2^{(3c+1)}$. For all of configurations, the size of formula is $ T(n)(3c+1)2^{(3c+1)} $ at most.
	
	If $ C_i $ is in the path of $ \Delta_{i,j}'=\Delta_{i,0}' $,then $ F $ is decided by $ \delta_0'$. $ F_3 $ exists in $ a $ of $ z $.
	
	If $ C_i $ is in the path of $ \Delta_{i,j}'=\Delta_{i,1}' $ or $ \Delta_{i,2}' $, then $ F $ is decided by $ \delta_1' $ or $ \delta_2' $. $ F_3 $ exists in $ b $ of $ z $.
	
	Step 4 construct formula $ F_4 $ to assert the final state and output. It is similar to Step 2. 
	
	At last, we AND all part of formulas. It becomes $ z = \varphi_z $. 
	
	Finally, the formula asserts that $ N $ reaches an accepting state iff $a=1$ of $ \varphi_z $ and $b=1$ of $\varphi_z$.
	
\end{proof}	


The construction of $ \varphi_z $ is a computation path of $ N $. So, $ \varphi_z $ can be computed by $ N $.

\subsection{The Derived Matrix in the Complex Style}

The derived matrix of $NP_f$, or say, of the system $<X, NP_f, \theta(
NP_f(x)), \psi(Y)> $ can be defined as 
\begin{equation}\label{phi.np}
	\Phi(NP_f) = (a_1,b_1, a_2, b_2,\cdots,a_n,b_n, NP_f, \theta(NP_f(x)), \psi(Y))
\end{equation}

Here, $NP_f$ is a program code, i.e., a constant. 

Obviously, when the $X$ is treated as real integer, $\Phi(NP_f) = \Phi(P_f)$. That is, the derived matrix of $P$ is a special case of $NP$.

Because we do not care about the $NP_f$ and $P_f$ as constants in $\Phi()$, we can rewrite the derived matrix as 
\begin{equation}\label{phi.np.2}
	\Phi'(NP_f) = (a_1,b_1, a_2, b_2,\cdots,a_n,b_n, \theta(NP_f(x)), \psi(Y))
\end{equation}

\subsection{The Linear Independence of the Derived Matrix}

\begin{theorem}\label{lin.indep}
	\item $\exists np \in NP, LinearDep(np) = false$	 
\end{theorem}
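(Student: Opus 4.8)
The plan is to prove the statement by exhibiting an explicit witness rather than arguing abstractly. By Theorem~\ref{p.lindep} every deterministic system $P_f$ has a linearly dependent derived matrix, so it suffices to produce a single $NP$ instance whose derived matrix~\eqref{phi.np.2} is linearly independent; since $3SAT\in NP$, I would take this witness to be a $3SAT$ formula, obtained (in $3$-clause form) from the reduction of the previous section applied to a nondeterministic Turing machine $N$ whose transition functions $\delta_1$ and $\delta_2$ genuinely differ, so that $\delta_1'\ne\emptyset$. This reduces the theorem to a single linear-algebra verification on a concrete complex Boolean system.

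First I would fix such an $N$ and an input $x$ that $N$ accepts, and write out the complex Boolean formula $\varphi_z$ built in the proof above. The structural fact I would lean on is the split established there: the real coordinate $a$ of $z$ carries $F_1$, $F_2$, the deterministic steps $\delta_0'$ of $F_3$, and $F_4$, while the imaginary coordinate $b$ carries the genuinely nondeterministic steps $\delta_1',\delta_2'$ of $F_3$; the recognised value is $\psi(Y)=a\vee b$, whereas the deterministic shadow of the run is $\theta(NP_f(x))=a$. Running over the records of the derived matrix (the assignments to the path variables), the rows with $b=1$, $a=0$ satisfy $\psi(Y)=1\ne 0=\theta(NP_f(x))$, while the rows with $b=0$ or $a=1$ satisfy $\psi(Y)=\theta(NP_f(x))$. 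Hence the derived matrix simultaneously contains records on which the output column agrees with the function column and records on which it disagrees.

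Next I would invoke the criterion of the subsection ``The Impact of the Constant in the Input'': when a system has both a record with $\psi(Y)=\theta(NP_f(x))$ and a record with $\psi(Y)\ne\theta(NP_f(x))$, its derived matrix is linearly independent. Concretely, a nontrivial relation among the columns of $\Phi'(NP_f)$ would, after discarding the constant $NP_f$, express $\psi(Y)$ as a fixed $\mathbb{Q}$-linear form in $a_1,b_1,\dots,a_n,b_n,\theta(NP_f(x))$ holding on every row; restricting to rows where only the single nondeterministic branch encoded in some $b_j$ is active shows this is impossible, since no such form can match both the $a=b$ rows and the $a=0$, $b=1$ rows. This is precisely where the hypothesis $NP$ is used and cannot be weakened to $P$: for $P_f$ the imaginary part is empty, the columns $b_1,\dots,b_n$ vanish, $\psi(Y)$ collapses to $\theta(P_f(x))=P_f(x)$, and one is back in the linearly dependent regime of Theorem~\ref{p.lindep}.

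The step I expect to be the real obstacle is the honest verification that such a $3SAT$ witness actually exists \emph{as a member of $NP$} with the claimed record behaviour --- that the nondeterministic branching genuinely survives in the derived matrix and cannot be absorbed by re-encoding the instance as an equivalent deterministic system, which would again make $\psi(Y)$ a linear function of the real coordinates. The algebra over any fixed finite matrix is routine; the load-bearing claim is that the mismatch $\theta(NP_f(x))\ne\psi(Y)$ is \emph{unavoidable} for at least one $NP$ problem, so that linear independence is intrinsic to $NP$ rather than an artefact of the complex Boolean packaging.
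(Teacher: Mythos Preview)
Your argument is essentially the paper's own proof, only spelled out in far greater detail. The paper's proof is three sentences: pick $x_0$ with $np_f(x_0)=a_0+ib_0$; when $b_0=0$ one has $\theta(np_f(x_0))=\psi(a_0+ib_0)$; when $b_0=1$ and $a_0=0$ one has $\theta(np_f(x_0))=0\neq 1=\psi(a_0+ib_0)$; hence $LinearDep(np)=\text{false}$. Your core observation---that rows with $a=0$, $b=1$ force $\theta\neq\psi$ while other rows give $\theta=\psi$, so the derived matrix contains both agreeing and disagreeing records and is therefore linearly independent by the criterion in ``The Impact of the Constant in the Input''---is exactly this, with the reduction of the previous section invoked to justify that such a row actually occurs.

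The one genuine addition in your write-up is the final paragraph, where you flag as ``the real obstacle'' that the mismatch $\theta\neq\psi$ must be shown \emph{intrinsic} to some $NP$ problem rather than an artefact of the complex Boolean encoding. The paper does not address this point at all; it simply asserts the existence of a suitable $x_0$ and moves on. So you have identified a gap the paper leaves open, but as far as matching the paper's intended argument goes, you have it.
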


\begin{proof}
	Assume for $x_0, np_f(x_0) = a_0 + i*b_0$. When $b_0 =0$, $\theta(np_f(x))=\psi(a_0+ i* b_0)$.When $b_0 =1, a_0=0$, $\theta(np_f(x))=0 \not =\psi(a_0+ i* b_0)=1$. Therefore, $LinearDep(np) = false$.	
\end{proof}

\begin{theorem}
	$P \not = NP$.
\end{theorem}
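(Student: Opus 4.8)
The plan is to obtain the final theorem as a direct corollary of Theorem~\ref{p.lindep} and Theorem~\ref{lin.indep}, via the elementary exclusion principle recalled earlier: if a predicate holds of every element of a set $X$ but fails for some object $y$, then $y \notin X$. First I would put Theorem~\ref{p.lindep} in the form needed here: under the identification of a problem with its time-optimal algorithm (Theorem~\ref{eqiv.pa}), every $p \in P$ is some $P_f$, and since the derived matrix of $P$ is the special case of the complex Boolean derived matrix (\ref{phi.np.2}) in which $\theta(P_f(x)) = \psi(Y)$ holds on every record, it is linearly dependent; hence $LinearDep(p) = true$ for every $p \in P$.

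Next I would invoke Theorem~\ref{lin.indep} to fix an explicit witness $np \in NP$ --- the $3SAT$-flavoured instance arising from the complex Boolean encoding of an $NDTM$ computation --- whose derived matrix $\Phi'(NP_f)$ contains a record with $\theta(np_f(x_0)) = 0$ while $\psi(Y) = 1$, which forces linear independence. Thus $LinearDep(np) = false$ although $np \in NP$.

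Finally I would close by contradiction: if $np \in P$, then $np = P_f$ for some deterministic polynomial-time algorithm, so Theorem~\ref{p.lindep} gives $LinearDep(np) = true$, contradicting the previous paragraph; hence $np \notin P$. Since $np \in NP$, this yields $NP \not\subseteq P$, and together with the inclusion $P \subseteq NP$ we conclude $P \subset NP$, in particular $P \neq NP$.

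The genuinely delicate point is not this last deduction --- a single set-theoretic step --- but the compatibility of the two ingredients: one must check that $LinearDep$ is a well-defined property of a problem, independent of the binary encoding and of the grouping of bits into complex Boolean variables, and that the object produced in Theorem~\ref{lin.indep} is really a member of $NP$ and not of some strictly larger class. If re-encoding the input could turn a linearly independent system into a dependent one, or if the complex Boolean construction quietly steps outside $NP$, the argument collapses; so the main obstacle is to certify that the two theorems speak about the same invariant attached to the same objects.
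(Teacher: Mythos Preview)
Your approach is exactly the paper's: the paper's proof is the single line ``According to Theorem~\ref{p.lindep} and Theorem~\ref{lin.indep}, $P \not = NP$,'' and your first three paragraphs simply unpack this same deduction via the exclusion principle set up in Section~2. Your closing paragraph of caveats goes beyond what the paper says, but the argumentative route is identical.
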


\begin{proof}
	According to Theorem \ref{p.lindep} and Theorem \ref{lin.indep}, $P \not = NP$.
\end{proof}

\section{An Example and its Explanation}

We will take a 3SAT example to illustrate the validation of $P \not = NP$.

\subsection{The definition of SAT problems}

The SAT problem is also called the Boolean satisfiability problem. This question can be expressed as, given a conjunctive normal form (CNF), that is, a series of disjunctive forms of the clause of the conjunctive form, ask whether there is a set of assignments that make the whole expression true.

If there is an assignment that satisfies the condition, the expression is satisfied; If no such assignment exists, it is not satisfied.

Obviously, if we want the whole CNF to be true, we need every clause to be true. And inside each clause it is a disjunctive form, so at least one of the literals needs to be true for this clause to be true.

Formally, the problem can be stated as:

Assume that there is a Boolean vector $x=(x_1,x_2, \cdots, x_n) $, and a conjunctive normal form $f $, $f(x)=C_1 \wedge C_2 \cdots  \wedge C_m$, among them, the $C_{\textit l }=\vee_1^k(x_j), 1 \le {\textit l } \le m, 1 \le j \le n$, if there exists $x$  such that $f(x)=1$, output $x$, otherwise, output 0. That is, the result set is $y=(y_1,y_2)$, where $y_1=x,y_2=f(x)$.

When $k=2$, it is called the $2SAT$ problem, and when $k=3$, it is called the $3SAT$ problem.

\subsection{Tarjan's Algorithm Solving $2SAT$ Problems}\label{tarjan}

Tarjan's algorithm is a very efficient algorithm for the $2SAT$ problem.

The first step in Tarjan's algorithm is to construct the graph. The method is: set up $2n$ node, respectively $x_1, x_2, \cdots, x_n $ and $\bar{x}_1, \bar{x}_2, \cdots, \bar{x}_n $.
Then we convert each clause into the form of $u \rightarrow v$, make a $u \rightarrow v$ line in the graph, and connect $\bar{v} \rightarrow \bar{u}$ according to the inverse theorem.

The second step of Tarjan's algorithm is to determine the strongly connected component.

In the directional graph $G$, two vertices are said to be strongly connected if there is a directed path from $u$ to $v$, and also a directed path from $v$ to $u$.
$G$ is said to be a strongly connected graph if every two vertices of the directed graph $G$ are strongly connected.

A formula is a contradiction when any node corresponding to the positive and negative literals of $x_i$ appears on a strongly connected path. Because on a strongly connected path, $x_i \rightarrow \bar{x}_i$, and have,$\bar{x}_i \rightarrow x_i$, form a cycle. Therefore, $\bar{x}_i$ and $x_i$ appear in the cycle at the same time, which constitutes a contradiction.

A strongly connected graph is a linearly dependent vector. Because any two nodes in the strongly connected graph are strongly connected, according to the definition of strong connection, there is $u \rightarrow v$ and $v \rightarrow u$, that is, $u \equiv v$ and $v$ are sufficient and necessary conditions for each other. Therefore, $u \equiv v$ , that is, they are linearly dependent.

For a strongly connected graph, you can make a contraction, that is, replace all variables in a strongly connected graph with one variable.

The maximal strongly connected sub-graph of  directed non-strongly connected graph is called strongly connected component. That is, to find a maximal sub-graph that is strongly connected in a graph that is not a strongly connected graph, or say, linearly independent vector when there is no contradiction. This linearly independent vector is called a strongly connected component.

The third step of Tarjan's algorithm is the assignment.

The assignment is actually to solve a linear Boolean equation. Boolean equations are easy to solve.

After excluding the linearly independent  expression $f$ with Tarjan's algorithm, there must be,

\begin{equation}\label{key111}
	k_1*x_1 + k_2 *x_2 + \cdots + k_n * x_n - k_{n+1} *f =1
\end{equation}

Let $q(x)= k_1*x_1 + k_2 *x2 + \cdots + k_n * x_n$,
then we have,
$q \in [0,2]$.

If $q=0$, then $x_1=x_2= \cdots = x_n=0$.

If $q=1$,then select a $i \in [1,n]$,  let $x_i=1$,the other bits are set to 0, and form $x$, hence valid this $x$, so only $n$ validation is needed.

If $q=2$,then select  $i \in [1,n]$ and $j \in [1,n]$, $i \neq j$,  let $x_i=1$ and $x_j=1$, the other bits are set to 0,and form $x$, hence valid this $x$, so only $n*(n-1)/2$ validation are needed.

\subsection{the explanation on SAT problem and linear dependence }\label{axioms}
In this section, we will explain the essential difference in linear correlation between $2SAT$ problem and $3SAT$ problem. The essential difference comes from the conflict of three axioms or the abandon of the law of exclusion in binary logic. If $P=NP$, that is, the algorithm for $2SAT$ can be used to solve the $3SAT$ problem, it must lead to the conflict of three important axioms in algebra and computation, or the abandon of the law of exclusion in binary logic.

\subsection{Three axioms in computation and algebra}
In computer science, it is a basic axiom that computational steps are performed step by step in sequence. In algebra, there are two axioms, namely the commutative law of addition and the associative law of addition. Here it is stated as follows.

\begin{definition}{\bf the sequential execution axiom.} In any trusted computing model, the algorithm's execution steps satisfy the full-order relation.
\end{definition}

\begin{definition}{\bf commutative law of addition.} For addition, always has $a + b = b + a$.
\end{definition}

\begin {definition} {\bf associative law of addition.} For addition, always has $(a + b) + c = a + (b + c) $.
\end{definition}

Obviously, the logic OR operation $\vee$ is an addition operation.

For the sequential execution axiom, in the parallel computing scenario, there is an equivalent partial order relation of the instruction execution steps for some algorithms, but in essence, is still full order relation. Deterministic Turing machines satisfy this axiom.

For the sake of discussion, only the commutative and associative laws of addition are mentioned here, and the commutative and associative laws of multiplication are omitted.
In fact, since the axiom of sequential execution of computation is naturally proposed in essence, addition and multiplication are not equal under the commutative and associative laws.

\subsection{ Linear Dependence of $2SAT$ problems}

The $CNF$ clause in the $2SAT$ problem allows at most two variables and, if there is a 2-ary operator, at most one 2-ary operator. That is, in the $2SAT$ problem, the associative law of logic OR operating $\vee$ does not exist. In the case of conjunctive normal form, every clause is required to be true. Therefore, the commutative law and associative law of each clause under logic AND operation do not conflict with the sequential instruction execution axiom of computation.

Depending on the discrete representation, $2SAT$ can be written as $<<X,f>,Y>$, where $<X,f>$ is the input to $2SAT$. That is,

\begin{equation}\label{2SAT.1}
2SAT=<<X,f>,Y>
\end{equation}

For any computable $f$, $f$ can be written as $<X,Y>$. Therefore, there is,

\begin{equation}\label{2sat.2}
2SAT=<<X,<X,Y>>,Y>=<X,Y>
\end{equation}

Since $Y$ in the input equals $Y$ in the output, the $2SAT$ problem must be linearly dependent.

\subsection{The linear independence of $3SAT$ problems}

The linear dependence of $2SAT$ is proved, with the same method, it seems that $3SAT$ is also linearly dependent. However, for $3SAT$ problems, because the clauses have three variables, there are two logic OR operations, so the commutative and associative laws take effect. At the same time, since there are multiple logic OR operations, the sequential instruction execution axiom of computation also takes effect. And these three axioms are in conflict with each other.

By default, the computational model evaluates clauses in left-to-right order.

In the SAT problem, the unsatisfiability of the $CNF$ formula maps to $y=0,y \in Y$.

In Tarjan's algorithm, the unsatisfiability of $CNF$ formula is mapped to the fact that the true value and false value of a bit variable can not be in a strongly connected cycle at the same time, that is, there is no circle containing both the true value and false value of a variable.

We construct a simple $3SAT$ problem based on the $2SAT$ problem. Let its $CNF$ formula be:

\begin{equation}\label{cnf.3sat}
f(x_1,x_2,x_3)= x_1 \wedge x_2 \wedge ( \bar{x}_1 \vee \bar{x}_2 \vee x_3)
\end{equation}

Tarjan's algorithm is used to calculate the third clause, $\bar {x} _1 \vee \bar {x} _2 \vee x_3 $, on the basis of associative law, $\bar {x} _1 \vee \bar {x} _2 \vee x_3 = (\bar {x} _1 \vee \bar{x}_2 )\vee x_3$. First calculate $\bar{x}_1 \vee \bar{x}_2$, obtain unsatisfied value. At this point, the Tarjan's algorithm does not read $\vee x_3$.

According to the commutative law and associative law, $\bar{x}_1 \vee \bar{x} _2 \vee x_3 = (\bar {x}_1 \vee x_3) \vee \bar {x}_2 $. In this case,Tarjan's algorithm does not read $\vee \bar{x}_2$. The two cases here correspond to two pieces of data.

So let's write these down, and we get the matrix $\Phi'$.

\begin{equation}\label{arr}
\Phi'=
\left(\begin{array}{ccccc}	
	0 & 0 & 0 & 0 &	0\\
	0 & 0 & 1 & 0 &	0\\
	0 & 1 & 0 & 0 &	0\\
	0 & 1 & 1 & 0 &	0\\
	1 & 0 & 0 & 0 &	0\\
	1 & 0 & 1 & 0 &	0\\
	1 & 1 & 0 & 0 &	0\\
	1 & 1 & 1 & 0 &	1\\
	1 & 1 & 1 & 1 &	1\\
\end{array} \right)
\end{equation}

The data in line 8 of the matrix \ref{arr} corresponds to the case where $\vee x_3$ is not read, and the data in line 9 corresponds to the case where $\vee \bar{x}_2$ is not read.

Obviously, in this matrix,$y \neq f$, $f$ is the existence of cycle in the form of $2SAT$. Simple calculation tells us that the $3SAT$ problem is linearly independent.

Actually, in the formula \ref{cnf.3sat} the third clause of the execution, there is a other way, that is $\bar {x}_1 \vee \bar{x}_2 \vee x_3 = \bar{x}_1 \vee (\bar {x}_2 \vee x_3) $.

$\bar{x}_2 \vee x_3$ is regarded as $\bar{x}_2'$, then $\bar{x}_2'$ can be regarded as a logical variable with three values, the corresponding relationship is shown in the Table \ref{tab.valueCompare}.

\begin{table}[hbp!]
\centering
\caption{The Contrast of Logic Values}
\label{tab.valueCompare}
\begin{tabular}{ccc}
	\toprule
	$x_2$ & $x_3$ & $\bar{x}_2'=\bar{x}_2 \vee x_3$	\\
	\midrule	
	0 & 0 & 0\\ 
	0 & 1 & 1  \\
	1 & 1/0 & 1  \\
	\bottomrule
\end{tabular}
\end{table}

Under three-valued logic, $x_2$ is true, real false and virtual false (i.e.,$x_2 =0,x_3=1$). When $x_2$ is true or false, Tarjan's algorithm can work normally, that is, it can get an unsatisfiable conclusion. When $x_2$ is false, since there is no line between false node and true node, the closed loop i.e., cycle, will not be formed. Therefore, Tarjan's algorithm will give a satisfactory conclusion.

Three-valued logic shows that the $2SAT$ problem, i.e., the 2-variable 1 binary operation dimension of the $SAT$ problem, is a cycle structure; while the $3SAT$ problem, i.e., the 3-variable 2 binary operation dimension of the $SAT$ problem, can no longer be a cycle structure. This fact is similar to the fact that a circle in topology in two dimension plane does not form a circle in three dimension plane.

For kSAT problem, the third clause in the formula \ref{cnf.3sat} can be transformed. Like adding $x_3$, more $x_k$ can be added, that is, it can be transformed into the formula \ref{cnf.3sat.m}.

\begin{equation}\label{cnf.3sat.m}
f(x_1,x_2,x_3)= x_1 \wedge x_2 \wedge ( \bar{x}_1 \vee \bar{x}_2 \vee x_3 \cdots \vee x_k)
\end{equation}

$x_2$ in  formula \ref{cnf.3sat.m} can be regarded as k-valued logic. k-valued logic is similar to three-valued logic in that it does not change the conclusion that the cycle has different continuity in different dimensions, which is corresponding to the reduction from $kSAT$ problem to $3SAT$ problem.

\section{Conclusions}

This paper has proved that the nature of $P$ is a polynomial-time function on real integer, meanwhile, the nature of $NP$ is a polynomial-time function on complex style, they have different dimensions which lead to $P \not =NP$. Of course, because there are a few representations of the system or the problem or the algorithm, there are a few proving ways to achieve the same results in this paper. Currently, we have at least known the discrete representation, the systemic representation, the field number representation, the algorithmic representation, the topological representation and the machine representation. These representations will demonstrate different aspects of $P$ and $NP$.

\bibliographystyle{amsplain}
\bibliography{ref}
\end{document}